\documentclass[11pt]{article}

\usepackage[a4paper,margin=1in]{geometry}
\usepackage[T1]{fontenc}
\usepackage[utf8]{inputenc}
\usepackage{microtype}
\usepackage{mathtools,amssymb,amsthm}
\allowdisplaybreaks[2]
\usepackage{enumitem}
\usepackage[numbers,square]{natbib}
\usepackage{xurl}
\usepackage[hidelinks]{hyperref}
\usepackage[nameinlink,capitalise,noabbrev]{cleveref}

\hypersetup{
  pdftitle={Randomizing the Number of Centers in k-means++},
  pdfauthor={Vaclav Rozhon},
  pdfsubject={Budget-smoothed approximation guarantees for k-means++ seeding},
  pdfkeywords={k-means++, cost-proportional sampling, D-squared sampling, budget-smoothed analysis, random cardinality}
}

\newcommand{\X}{X}
\newcommand{\R}{\mathbb{R}}
\newcommand{\E}{\mathbb{E}}
\newcommand{\Prob}{\mathbb{P}}
\newcommand{\OPT}{\operatorname{OPT}}
\newcommand{\ALG}{\operatorname{ALG}}
\newcommand{\one}{\mathbf{1}}
\newcommand{\cU}{\mathcal{U}}
\newcommand{\cW}{\mathcal{W}}

\renewcommand{\phi}{\varphi}

\newtheorem{theorem}{Theorem}
\newtheorem{lemma}[theorem]{Lemma}

\theoremstyle{definition}

\title{Randomizing the Number of Centers in $k$-means++}
\author{V\'aclav Rozho\v{n}}
\date{}

\begin{document}
\maketitle

\begin{abstract}
The $k$-means++ algorithm is a standard and widely used seeding method for
$k$-means clustering, but for a fixed number $k$ of centers its worst-case
expected approximation ratio is $\Theta(\log k)$. We consider the same
algorithm when an adversary first fixes the dataset and some $K$; the number
of centers $k$ is then chosen uniformly from $\{K,\ldots,2K-1\}$. We prove
that $k$-means++ is an
$O(1)$-approximation with constant probability in this
\emph{budget-smoothed} setup.
\end{abstract}

\section{Introduction}\label{sec:intro}

The $k$-means problem asks us to choose a set $C=\{c_1,\dots,c_k\}$ of $k$
centers for a dataset $\X$ of $n$ points in $\R^d$ so as to minimize
\[
  \phi(\X,C)
  =\sum_{x\in\X}\min_{c\in C}\|x-c\|_2^2.
\]
A classical initialization for
this problem is the ordinary $k$-means++ algorithm of
\cite{arthur2007kmeanspp}; a greedy variant of this initializer is used,
for example, in the scikit-learn library
\citep{pedregosa2011scikit}. The ordinary algorithm chooses the first center
uniformly from the input points and, given a set of already sampled
centers $C_i$, every subsequent center $c_{i+1}$ is chosen with
probability proportional to its current contribution to the cost. That is,
each point $x\in\X$ is chosen with probability
$\phi(x,C_i)/\phi(\X,C_i)$. We call such a draw a
\emph{cost-proportional sample}; it is often called a $D^2$-sample in the
literature.

If the algorithm chooses exactly $k$ centers, its expected cost is
$O(\log k)$ times the optimum, and this is tight
\citep{arthur2007kmeanspp}.
The known lower bounds, however, choose an instance for a prescribed value
of $k$. They do not show that one fixed instance is simultaneously hard
for many nearby values of $k$.

In this paper we consider the following experiment. An adversary first fixes $\X$ and $K$. We then choose
\begin{equation}\label{eq:experiment}
  k\sim\operatorname{Unif}(\{K,\ldots,2K-1\}),
\end{equation}
run the ordinary $k$-means++ algorithm for $k$ steps, and compare its cost with the optimum using the same number $k$ of centers. 

We show that for this smoothed experiment, the $k$-means++ algorithm in fact achieves $O(1)$ competitive ratio with constant probability. In particular, there is always a good set $G \subseteq \{K, ..., 2K-1\}$ on which the algorithm achieves a constant approximation ratio. 

\begin{theorem}\label{thm:main}
There is a universal constant $C$ such that, for every finite nonempty
dataset $\X$ and every $K\ge1$, there is a set
\[
  G\subseteq\{K,\ldots,2K-1\},
  \qquad |G|> K/2,
\]
such that every $t\in G$ satisfies
\[
  \Prob \left[
    \frac{\ALG^t(\X)}{\OPT^t(\X)}\le C
  \right]\ge\frac12.
\]
\end{theorem}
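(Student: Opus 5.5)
The plan is to reduce the statement to an \emph{averaged} failure bound and then apply Markov's inequality over $t$. For $t\in\{K,\dots,2K-1\}$ let $p_t=\Prob[\ALG^t(\X)>C\cdot\OPT^t(\X)]$. It suffices to prove
\[
  \frac1K\sum_{t=K}^{2K-1}p_t\le\frac14 .
\]
Indeed, then fewer than $K/2$ indices have $p_t>1/2$, and the remaining indices form $G$ with $|G|>K/2$. So the whole task is to show that the run of $k$-means++ fails at a uniformly random stopping time in $[K,2K)$ only with small constant probability. Note that $\ALG^t$ for different $t$ are prefixes of one run, which is what lets us analyse a single random process.

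For the averaged bound I would follow the Arthur--Vassilvitskii potential argument, but make explicit where the $\log k$ arises. Fix $t$ and an optimal $t$-clustering $\mathcal{A}_t$. Call an optimal cluster \emph{covered} once a sampled center lands in it. Covered clusters cost $O(1)$ times their optimal cost in expectation, by the standard uniform and $D^2$ sampling lemmas. The loss comes from uncovered clusters, whose cost is paid by a ``wasted'' step: a step sampling a point from an already covered cluster. In the standard analysis the $i$-th wasted step is charged a $1/u$ fraction of the current cost, where $u$ is the number of uncovered clusters, and summing over wasted steps produces the harmonic number $H_t$.

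The key step is to show that with randomized stopping this charge telescopes. Suppose a step is taken when the covered part is cheap and the uncovered part dominates. Then the sample lands in an uncovered cluster with probability bounded away from $0$. I would then show that the cost at time $t$ exceeds $C\cdot\OPT^t$ only if the run is currently in a ``bad phase'' in which a constant fraction of the recent steps were wasted. The bound to aim for is: the expected number of indices $t\in[K,2K)$ at which the run is in a bad phase with respect to $\mathcal{A}_t$ is at most $K/8$, plus a Markov loss. To handle the fact that $\mathcal{A}_t$ changes with $t$, I would compare $\mathcal{A}_t$ with the run's cost using monotonicity $\OPT^{t}\le\OPT^{s}$ for $s\le t$. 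I would also use the bicriteria fact that $D^2$ sampling with $t+m$ centers costs $O(1+t/m)\cdot\OPT^t$ with constant probability, applied with $m$ proportional to the length of the current good stretch.

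I expect the main obstacle to be the last step: a single sampling trajectory must be controlled simultaneously against the different optimal clusterings $\mathcal{A}_t$ for all $t$. The ratios $\OPT^{t-1}/\OPT^t$ can all be large, so one cannot simply telescope $\log(\OPT^K/\OPT^{2K})$. What I would try first is an amortization. Each wasted step at time $s$ is charged to the later times $t\ge s$ at which it causes failure. I would then show, via a geometric decay in the probability that failures persist, that each step is charged to $O(1)$ stopping times on average, which yields the $1/4$ bound.
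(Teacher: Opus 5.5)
Your reduction to the averaged bound $\frac1K\sum_t p_t\le\frac14$ is valid, but everything after it is a heuristic plan, and there is a genuine gap: the one quantitative tool you name is too weak. The generic bicriteria bound ($t+m$ centers cost $O(1+t/m)\OPT^t$, or even the sharper $O(\log(2t/m))$) gives a constant only when the stretch on which $\OPT$ is roughly constant has length $m=\Omega(t)$. When $\OPT^i(\X)$ drops by a constant factor at every step, every such stretch has length $1$, and nothing in your plan controls that regime. Knowing that ``the sample lands in an uncovered cluster with probability bounded away from $0$'' is not enough there. The claims that failure at time $t$ forces a ``bad phase'' with a constant fraction of recent wasted steps, and that each wasted step is charged to $O(1)$ stopping times by ``geometric decay'', are asserted without any mechanism.

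The missing idea is a bound on wasted steps in terms of the whole optimum curve. Fix $t$ and $\mathcal P^t$. Given the history, step $i+1$ is wasted with probability $\phi(\cW_i^t,C_i)/\phi(\X,C_i)$. The numerator has expectation at most $5\OPT^t(\X)$ by the one-cluster lemma. The denominator is deterministically at least $\OPT^i(\X)$, since $C_i$ is a feasible $i$-center solution. Hence $\E R^t\le 5z^t$ with $z^t=\sum_{i<t}\OPT^t(\X)/\OPT^i(\X)$.

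This shows that large ratios $\OPT^{t-1}/\OPT^t$ are the easy case, not the obstacle. It also removes your worry about controlling one trajectory against all $\mathcal P^t$ simultaneously, because each $t$ is analyzed separately. Two further steps finish the proof.
\begin{itemize}
\item With $e$ extra samples, couple the run with fair coins until the uncovered cost drops below the covered cost. This gives $\Prob[\ALG^{q+e}(\X)>\Lambda\OPT^q(\X)]=O(z^q/(e+1)+1/\Lambda)$.
\item Apply a deterministic random-shift partition of $\log(\OPT^1/\OPT^i)$ into windows of width $L$. This yields blocks on which $\OPT$ varies by a factor less than $e^L$. Since $\E_\theta\sum_{p\text{ starts a block}}z_p\le(N-1)/L$, all but $(N-1)/4$ indices $t$ satisfy $t-p(t)+1\ge(L/4)z_{p(t)}$.
\end{itemize}
Your block intuition is right, but it needs this $z$-refined repair bound rather than the generic bicriteria one. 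Apply the first step with $q=p(t)$ and $e=t-q$. The averaging over $t$ is then over a property of the sequence $\OPT^i(\X)$, not over the run of the algorithm.
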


Here $\ALG^t(\X)$ is the cost of the first $t$ sampled centers and
$\OPT^t(\X)=\min_{|C|=t}\phi(\X,C)$; the probability is over the randomness in the algorithm.
Averaging over $t$ immediately gives a $C$-approximation with joint probability greater than $1/4$ in \eqref{eq:experiment}.

The parameters in the theorem are somewhat arbitrary -- one can replace the range $\{K, ..., 2K-1\}$ with any range of size $\Omega(K)$ and the proportion $|G|/K$ and success probability can be made arbitrarily close to $1$ at the expense of the constant $C$ that we do not try to optimize.

We believe our experiment matches the practical usage of $k$-means -- for
nontrivial-sized $k$ where the $O(1)$ vs $O(\log k)$ distinction matters,
we typically do not have a precise value of $k$ in mind, we simply aim to
discretize our original dataset $\X$ into a much smaller $k$-sized one.
\Cref{thm:main} then says that for most $k$'s in our intended range, the
algorithm is $O(1)$-approximate.

\subsection{Related work}

\paragraph{Fixed-budget guarantees.}
\cite{arthur2007kmeanspp} proved the original
$8(\ln k+2)$ expected approximation bound for $k$-means++ and a matching
$\Omega(\log k)$ lower bound. The upper bound was sharpened to
$5(\ln k+2)$ by \cite{makarychev2020improved}. A logarithmic loss can
occur with very high probability \citep{brunsch2013bad}, and already in two
dimensions \citep{bhattacharya2016tight}. These results fix the target
budget before constructing the instance.
\cite{ostrovsky2006effectiveness} obtained stronger guarantees under a
separation condition expressed by a sufficiently large drop between
successive optimum values.

\paragraph{Oversampling and variants of $k$-means++ seeding.}
If the algorithm may output more centers than the comparator, additional
centers sampled by the $k$-means++ rule improve the guarantee
\citep{aggarwal2009adaptive,wei2016constant,makarychev2020improved}.
In particular, $k+\Delta$ samples give an
$O(\log(2k/\Delta))$ approximation to $\OPT^k(\X)$ in the small-oversampling
regime \citep{makarychev2020improved}. This is the intuition behind our
proof, but we cannot apply the result directly: we always
compare $\ALG^t(\X)$ with $\OPT^t(\X)$ in our result.
Related analyses concern parallel
\citep{bahmani2012scalable,rozhon2020simple}, local-search
\citep{lattanzi2019better,choo2020few}, greedy and noisy
\citep{bhattacharya2020noisy,grunau2023greedy,grunau2023noisy}, and
outlier-robust \citep{grunau2022outliers} variants of $k$-means++.

\paragraph{Smoothed and incremental models.}
Classical smoothed analysis perturbs an adversarial numerical instance
\citep{spielman2004smoothed}; for clustering,
\cite{arthur2011smoothed} used this framework to analyze the running time
of Lloyd's method. Our geometry is not perturbed. A closely related model
is the budget-smoothed analysis of submodular maximization
\citep{rubinstein2022budget}; the same viewpoint has also been used in
budget-feasible mechanism design \citep{rubinstein2023beyond}. A different
line of work constructs, for metric $k$-median and related objectives, one
nested sequence of centers that is competitive at every cardinality
\citep{mettu2003online,chrobak2008incremental}. Applying \cref{thm:main} for all $K = 2^i$, we show that the nested sequence of the ordinary randomized $k$-means++ works on all but a constant fraction of indices.  

\subsection{Our method, in a nutshell}

The main idea behind our proof is the following. Consider two special cases. First, consider the case where the optimal cost for $K$ clusters is comparable for the optimal cost for $2K$ clusters. In that case, we can use the oversampling literature that proves that $k$-means++ with $2K$ centers is $O(1)$ approximation of optimum with $K$ centers to conclude that for most values of $k \in \{K, ..., 2K-1\}$, $k$-means++ cost is comparable to optimal cost on the same number of clusters. 

In the opposite case, optimum solution drops by, say, the same constant factor whenever we increase the number of centers from $k$ to $k+1$. In that scenario however, we can analyze the $k$-means++ algorithm more diligently and observe that throughout the algorithm, it behaves in a very pleasant way -- each new sampled center with high probability hits a new optimal cluster. This leads to the algorithm finishing with a good approximation with large constant probability. 

Our analysis merges the two cases by splitting the range $\{K, ..., 2K-1\}$ into blocks, inside each of which the cost of the optimum solution remains the same. Each block is analyzed by oversampling argument, and in between blocks, the algorithm can be controlled akin our second special case. 

\section{Preliminaries}\label{sec:preliminaries}

We regard $\X$ as a multiset of $n$ points of $\R^d$, so that repeated
locations remain distinct elements; sampling a point of $\X$ means sampling
one of its elements. For $P\subseteq\X$ and a finite center set $C\subseteq\R^d$,
write
$\phi(P,C)=\sum_{x\in P}\min_{c\in C}\|x-c\|_2^2$; we write $\phi(x,C)$ for $\phi(\{x\},C)$.

The $k$-means++ algorithm first samples $c_1$ uniformly from $\X$. Next, each $c_i, i \ge 2$ is sampled from $\X$ \emph{proportionally to its cost}, i.e., $x \in \X$ is picked with probability $\phi(x, C_{i-1}) / \phi(\X, C_{i-1})$, where we use $C_i = \{c_1, ..., c_i\}$.

We use $\OPT^t(P) = \min_{c_1^\star, ..., c_t^\star \in \R^d} \phi(P, \{c_1^\star, ..., c_t^\star\})$.
It is a standard fact that for $\OPT^1(P)$, the minimizer $c^\star$ is the mean of the pointset $\frac{1}{|P|} \sum_{x\in P} x$. For each $t$, we fix one optimal clustering of $\X$ and denote it as $\mathcal{P}^t = (P_1^t, ..., P_t^t)$; in particular we have $\OPT^t(\X) = \OPT^1(P_1^t) + ... + \OPT^1(P_t^t)$.

We use $\ALG^t(P)$ to denote the cost $\phi(P, C_t)$ of the first $t$ centers $c_1, ..., c_t$ generated by $k$-means++; this cost is a random variable. Since $C_t$ is a feasible solution with $t$ centers,
\begin{equation}\label{eq:prefix-lower}
  \ALG^t(P)\ge\OPT^t(P).
\end{equation}

In the analysis below, we assume $\OPT^{2K-1}(\X)>0$, which ensures that
every required sampling distribution is well-defined. The general case is
handled by truncating the optimum curve at its last positive value and
treating all subsequent budgets as optimal; we omit these routine details.

We use the following standard
one-cluster estimate of \cite{arthur2007kmeanspp} with tight constants from  \cite{makarychev2020improved}.

\begin{lemma}[One-cluster estimates]\label{lem:one-cluster}
Let $P\subseteq\X$ be nonempty.
\begin{enumerate}[label=(\roman*)]
  \item If $c$ is uniform in $P$, then
  \[
    \E \left[ \phi(P, \{c\}) \right] =2\OPT^1(P).
  \]
  \item Fix arbitrary current centers $C_i$, and pick $c_{i+1} \in P$ proportionally to its cost, i.e., $x \in P$ is picked with probability $\phi(x, C_i) / \phi(P, C_i)$. Then,
  \[
    \E \left[ \phi(P, C_i \cup \{c_{i+1}\}) \right] \le 5\OPT^1(P).
  \]
\end{enumerate}
\end{lemma}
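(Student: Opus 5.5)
The statement to prove is \cref{lem:one-cluster}. Both parts are the classical computations of \cite{arthur2007kmeanspp}, with the sharper constant in (ii) following \cite{makarychev2020improved}. I would use the standard bias--variance identity: for any $z\in\R^d$,
\[
  \phi(P,\{z\})=\OPT^1(P)+|P|\,\|z-\mu\|^2,
  \qquad \mu=\tfrac1{|P|}\textstyle\sum_{x\in P}x .
\]
It follows by expanding $\|x-z\|^2=\|x-\mu\|^2+2\langle x-\mu,\mu-z\rangle+\|\mu-z\|^2$ and noting that the cross term sums to zero over $P$.

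\textbf{Part (i).} Averaging the identity over a uniform $c\in P$ gives
\[
  \E[\phi(P,\{c\})]=\OPT^1(P)+\sum_{c\in P}\|c-\mu\|^2=2\OPT^1(P).
\]
This is an exact equality.

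\textbf{Part (ii).} I would follow the Arthur--Vassilvitskii argument and then refine the constant.
\begin{enumerate}[label=(\alph*)]
  \item Write $D(x)^2=\phi(x,C_i)$. After adding $c$, each point's cost is $\min(D(x)^2,\|x-c\|^2)$. Hence
  \[
    \E\bigl[\phi(P,C_i\cup\{c\})\bigr]=\sum_{c\in P}\frac{D(c)^2}{\sum_{a\in P}D(a)^2}\sum_{x\in P}\min\bigl(D(x)^2,\|x-c\|^2\bigr).
  \]
  \item Use the triangle inequality $D(c)\le D(a)+\|a-c\|$, which gives $D(c)^2\le 2D(a)^2+2\|a-c\|^2$. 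Average this over $a\in P$ to bound the weight $D(c)^2$.
  \item The term with $2D(a)^2$ pairs with $\min(\cdot)\le\|x-c\|^2$, and its contribution is at most $2\cdot\frac1{|P|}\sum_{c,x}\|x-c\|^2=4\OPT^1(P)$. The term with $2\|a-c\|^2$ pairs with $\min(\cdot)\le D(x)^2$; the normalizer cancels, and this contribution is also at most $4\OPT^1(P)$.
\end{enumerate}
This yields the original bound of $8\OPT^1(P)$.

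To reach $5$, I would avoid splitting the triangle inequality symmetrically. The plan is to use $D(c)^2\le(1+\epsilon)D(a)^2+(1+1/\epsilon)\|a-c\|^2$ together with the single-min bound $\min(u,v)\le \lambda u+(1-\lambda)v$. Optimizing $\epsilon$ and $\lambda$ jointly, I expect to get the constant $5$, matching \cite{makarychev2020improved}. If that direct optimization does not close, I would instead reproduce their argument: condition on the nearest current center and compare against the uniform-sampling bound from part (i).

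\textbf{Main obstacle.} Recovering exactly $5$ rather than $8$ in part (ii) is the hard step. If only the weaker constant were obtained, every later use in the paper would go through with a larger universal constant $C$, so nothing downstream depends on the precise value.
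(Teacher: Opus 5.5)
Part (i) is correct, and so is your derivation of $8\OPT^1(P)$ in (ii), which is exactly the Arthur--Vassilvitskii computation. But the lemma asserts the constant $5$, and your proposal does not establish it. The paper does not prove (ii) either; it takes the constant $5$ from \cite{makarychev2020improved}.

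The route you sketch to $5$ cannot work. Write $S=\phi(P,C_i)$ and $g_c=\sum_{x\in P}\|x-c\|^2$. The convex bound on the minimum buys nothing. Any $\lambda>0$ on the $(1+\epsilon)D(a)^2$ piece leaves a term $(1+\epsilon)\lambda S$, which is unbounded relative to $\OPT^1(P)$ when the current centers are far away. Any $\lambda<1$ on the $(1+1/\epsilon)\|a-c\|^2$ piece leaves $\frac{(1+1/\epsilon)(1-\lambda)}{|P|\,S}\sum_{c\in P}g_c^2$, which is unbounded as $S\to0$. So you are forced back to the pairing of your step (c), and then $2(1+\epsilon)+2(1+1/\epsilon)\ge 8$.

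More decisively, no constant $\epsilon$ suffices even if every minimum is evaluated exactly. Let $P$ consist of $n-1$ copies of the origin and one point $y$ with $\|y\|=L$, and let $C_i=\{-y\}$. Then $S=(n+3)L^2$. The true expectation is $\frac{5(n-1)}{n+3}L^2$ against $\OPT^1(P)=\frac{n-1}{n}L^2$, so the constant $5$ is asymptotically sharp and leaves no room for slack. Your averaged inequality, however, bounds $D(y)^2$ by about $(2+\epsilon+1/\epsilon)L^2$ and each $D(0)^2$ by about $(1+\epsilon)L^2$, although the true value is $L^2$. For large $n$ the resulting estimate is about $(3+2\epsilon+1/\epsilon)L^2\ge(3+2\sqrt2)L^2\approx 5.83\,L^2$.

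Within your framework, the split would at least have to depend on $c$, because a single split cannot be tight both on the outlier and on the bulk. Reaching $5$ in general needs the finer argument of \cite{makarychev2020improved}. Your fallback, ``reproduce their argument'', is only a pointer to that paper, i.e., the same citation the paper relies on.

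As you observe, nothing downstream needs $5$. With $8$ in its place, \cref{lem:first-hit} and \cref{lem:wasted} hold with $8$, and \cref{thm:main} follows with a larger $C$. As a proof of the lemma as stated, however, part (ii) is incomplete.
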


\section{Analysis}\label{sec:analysis}

\subsection{Wasted centers in a prefix}\label{sec:wasted}

Recall that $c_1, c_2, \dots$ are the centers sampled by $k$-means++. Fix a
reference clustering $\mathcal P^t$. For a step $i$, we call a cluster $P_j^t$
\emph{covered} if $C_i\cap P_j^t\neq\emptyset$ and \emph{uncovered} otherwise;
this splits the clustering into
\[
  \mathcal P^t=\cU_i^t\sqcup\cW_i^t,
\]
where $\cU_i^t$ collects the uncovered clusters and $\cW_i^t$ the covered
ones. Throughout, we abuse the notation and write $\phi(\cW_i^t, C)$ for
$\phi\big(\bigcup_{P\in\cW_i^t}P,\;C\big)$, i.e., we identify a collection of
clusters with the set of points it contains. In particular,
\[
  \phi(X, C_i) =\phi(\cU_i^t,C_i)+\phi(\cW_i^t,C_i).
\]

We observe that the expected cost of the covered clusters is small, as in the
standard $k$-means++ analysis \citep{arthur2007kmeanspp}.

\begin{lemma}\label{lem:first-hit}
For every $i$ and $t$ we have $\E[\phi(\cW_i^t, C_i)]\le5\OPT^t(\X)$.
\end{lemma}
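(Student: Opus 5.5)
We decompose by clusters and charge each covered cluster to the step at which it first became covered. Since $C_i$ only grows, $\phi(P,C_i)$ is nonincreasing in $i$. Hence for each cluster $P=P_j^t$ we have
\[
  \phi(\cW_i^t,C_i)=\sum_{j}\one[\tau_j\le i]\,\phi(P_j^t,C_i)\le\sum_j\one[\tau_j\le i]\,\phi(P_j^t,C_{\tau_j}),
\]
where $\tau_j$ denotes the first step at which a center lands in $P_j^t$. It therefore suffices to show that, for each $j$,
\[
  \E\bigl[\one[\tau_j\le i]\,\phi(P_j^t,C_{\tau_j})\bigr]\le 5\OPT^1(P_j^t).
\]
Summing over $j$ then gives $5\OPT^t(\X)$, because the $P_j^t$ partition $\X$ and $\OPT^t(\X)=\sum_j\OPT^1(P_j^t)$.

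To prove the per-cluster bound, we split according to the value $s=\tau_j$ and condition on the history $C_{s-1}$. If $s=1$, the first center is uniform on $\X$. Conditioned on $c_1\in P$, it is uniform on $P$, so \cref{lem:one-cluster}(i) gives conditional expected cost $2\OPT^1(P)\le5\OPT^1(P)$.

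If $s\ge2$, fix any history $C_{s-1}$ with $C_{s-1}\cap P=\emptyset$. Conditioned on this history and on the event $c_s\in P$, the center $c_s$ is distributed within $P$ proportionally to $\phi(x,C_{s-1})$. This is exactly the situation of \cref{lem:one-cluster}(ii), so
\[
  \E[\phi(P,C_s)\mid C_{s-1},\,c_s\in P]\le5\OPT^1(P).
\]

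Finally we combine the cases. The events $\{\tau_j=s\}$ for $s\le i$ are disjoint. Each of them is the intersection of an event determined by $C_{s-1}$ (namely $P$ is uncovered) with the event $c_s\in P$. By the tower property,
\[
  \E\bigl[\one[\tau_j\le i]\,\phi(P,C_{\tau_j})\bigr]=\sum_{s\le i}\Prob[\tau_j=s]\cdot\E[\phi(P,C_s)\mid\tau_j=s]\le5\OPT^1(P)\sum_{s\le i}\Prob[\tau_j=s]\le5\OPT^1(P).
\]

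The only delicate point is the conditioning. The bound in \cref{lem:one-cluster}(ii) is for the conditional distribution given the hit. We therefore have to note that conditioning on the whole past $C_{s-1}$, rather than only on $\tau_j=s$, still leaves $c_s$ distributed within $P$ proportionally to the current cost. This is immediate from the sampling rule, so no real obstacle arises. We also use the assumption $\OPT^{2K-1}(\X)>0$ so that all the sampling distributions are well defined. Where $\phi(P,C_{s-1})=0$, the probability of hitting $P$ is zero and that term contributes nothing.
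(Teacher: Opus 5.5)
Your proof is correct and follows the paper's argument: charge each covered cluster to its first-hit step, use that $\phi(P_j^t,C_i)$ is nonincreasing in $i$, and apply \cref{lem:one-cluster} to the first hit. You also make explicit two points the paper leaves implicit. One is the conditioning on the random first-hit time $\tau_j$, where the history $C_{s-1}$ is fixed and then averaged. The other is the case $s=1$, which needs part (i) rather than part (ii).
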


\begin{proof}
Fix a covered cluster $P_j^t$ and consider the step $i'\le i$ when we sampled
$c_{i'}\in P_j^t$. Applying \cref{lem:one-cluster}, we conclude that
$\E[\phi(P_j^t, C_{i'})]\le5\OPT^1(P_j^t)$. Since
$\phi(P_j^t, C_i)\le\phi(P_j^t, C_{i'})$, summing over the covered clusters
gives $\E[\phi(\cW_i^t, C_i)]\le5\sum_j\OPT^1(P_j^t)=5\OPT^t(\X)$.
\end{proof}

We call a center $c_{i+1}$ \emph{wasted} relative to $\mathcal P^t$ if it lands
in an already covered cluster, i.e., if $c_{i+1}\in\cW_i^t$. Our goal is to
upper bound the number of wasted centers; let $R^t$ be their number among
$c_2,\ldots,c_t$. To this end, we define
\begin{equation}\label{eq:zq}
  z^t=\sum_{i=1}^{t-1}\frac{\OPT^t(\X)}{\OPT^i(\X)}.
\end{equation}

\begin{lemma}\label{lem:wasted}
For every $t$, we have
\[
  \E\left[R^t\right]\le5z^t.
\]
\end{lemma}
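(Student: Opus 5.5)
We write $R^t=\sum_{i=1}^{t-1}\one[c_{i+1}\in\cW_i^t]$ and bound each indicator's expectation by conditioning on the first $i$ centers. Given $C_i$, the center $c_{i+1}$ is a cost-proportional sample, so
\[
  \Prob\left[c_{i+1}\in\cW_i^t\mid C_i\right]=\frac{\phi(\cW_i^t,C_i)}{\phi(\X,C_i)}.
\]
The denominator is at least $\ALG^i(\X)\ge\OPT^i(\X)$ by \eqref{eq:prefix-lower}, which holds deterministically for every realization of $C_i$, since $C_i$ is a feasible solution with $i$ centers. The conditional probability is therefore at most $\phi(\cW_i^t,C_i)/\OPT^i(\X)$, and the denominator is now a constant.

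Taking expectations and applying \cref{lem:first-hit} at step $i$ with reference clustering $\mathcal P^t$ gives
\[
  \Prob\left[c_{i+1}\in\cW_i^t\right]\le\frac{\E[\phi(\cW_i^t,C_i)]}{\OPT^i(\X)}\le\frac{5\OPT^t(\X)}{\OPT^i(\X)}.
\]
Summing over $i=1,\dots,t-1$ and using linearity of expectation yields $\E[R^t]\le 5z^t$, as in \eqref{eq:zq}.

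The step that needs care is the use of \cref{lem:first-hit} for an index $i<t$, where the covered set $\cW_i^t$ is random and is defined with respect to the $t$-clustering rather than the $i$-clustering. That lemma is stated for every pair $i,t$. Its proof only uses that each covered cluster's first-hit center is a cost-proportional (or, for $c_1$, uniform) sample conditioned on landing in that cluster. So it applies directly, provided the conditioning on the first-hit time is handled as in that lemma. Two further points need checking. First, the deterministic lower bound on the denominator is what lets us avoid the ratio-of-expectations issue. Second, the case $i=1$ is included in the sum, and there $\OPT^1(\X)>0$ holds under the standing assumption $\OPT^{2K-1}(\X)>0$.
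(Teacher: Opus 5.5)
Your proof is correct and follows the paper's argument essentially verbatim. You condition on $C_i$, lower-bound the denominator $\phi(\X,C_i)=\ALG^i(\X)$ deterministically by $\OPT^i(\X)$ via \eqref{eq:prefix-lower}, take expectations, apply \cref{lem:first-hit} with reference clustering $\mathcal P^t$, and sum over $i=1,\dots,t-1$; your remark that this deterministic denominator bound avoids a ratio-of-expectations issue is exactly what the paper's one-line proof uses implicitly.
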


\begin{proof}
Given the first $i$ samples, the center $c_{i+1}$ is sampled proportionally to
its cost, so it is wasted with probability $\phi(\cW_i^t,C_i)/\phi(\X,C_i)$.
By \eqref{eq:prefix-lower}, $\phi(\X,C_i)=\ALG^i(\X)\ge\OPT^i(\X)$, and
therefore, by \cref{lem:first-hit},
\[
  \Prob(c_{i+1}\text{ is wasted})
  \le\frac{\E[\phi(\cW_i^t, C_i)]}{\OPT^i(\X)}
  \le5\frac{\OPT^t(\X)}{\OPT^i(\X)}.
\]
Summing over $i=1,\ldots,t-1$ proves the claim.
\end{proof}

\subsection{Using the additional samples}\label{sec:additional}

The next lemma uses $z^q$ to bound the probability that the cost of
$k$-means++ is substantially larger than the optimum. Allowing $e$
additional samples improves this bound.

\begin{lemma}\label{lem:repair}
For every $q\ge1$, $e\ge0$, and $\Lambda>0$,
\begin{equation}\label{eq:repair-tail}
  \Prob[\ALG^{q+e}(\X)>\Lambda\OPT^q(\X)]
  =O\left(\frac{z^q}{e+1}+\frac1\Lambda\right).
\end{equation}
\end{lemma}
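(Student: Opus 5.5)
Fix the reference clustering $\mathcal P^q$ and argue in two phases. In the first phase we run the algorithm for $q$ steps. After step $q$, the number of uncovered clusters of $\mathcal P^q$ equals exactly $R^q$. This is because each non-wasted center among $c_1,\dots,c_q$ covers a new cluster, and $c_1$ always covers one, so $|\cU_q^q| = R^q$. By \cref{lem:wasted}, $\E[R^q]\le 5z^q$. In the second phase the $e$ extra samples act as oversampling that repairs these $R^q$ uncovered clusters. We never need more than $e$ of them. If $e$ is large compared with $R^q$, we argue the cost drops to $O(\OPT^q)$. If $R^q$ is comparable to $e$, we pay probability $\Prob[R^q > (e+1)/c]\le 5c\,z^q/(e+1)$ by Markov. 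This produces the first term of \eqref{eq:repair-tail}. The second term comes from a final Markov inequality on a cost whose expectation is $O(\OPT^q(\X))$.

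Concretely, write $u=R^q$ for the number of uncovered clusters after step $q$. Condition on the event $u\le (e+1)/c$ for a suitable constant $c$; the complementary event is charged to $O(z^q/(e+1))$. On this event, consider steps $q+1,\dots,q+e$. At each step $i$, split $\phi(\X,C_i)=\phi(\cU_i^q,C_i)+\phi(\cW_i^q,C_i)$. By \cref{lem:first-hit}, the covered part has expectation at most $5\OPT^q(\X)$ throughout, since covered clusters only gain centers and their cost is nonincreasing. For the uncovered part, use a potential argument in the style of the oversampling analyses of \citet{aggarwal2009adaptive,makarychev2020improved}. Say step $i$ is \emph{bad} if $\phi(\cU_i^q,C_i)\ge \phi(\cW_i^q,C_i)$. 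At a bad step, the next sample lands in an uncovered cluster with probability at least $1/2$, and that cluster then becomes covered, contributing expected cost at most $5\OPT^1$ of it by \cref{lem:one-cluster}(ii). Because at most $u$ clusters can ever be newly covered, the number of bad steps has expectation at most $2u$. A Markov bound then shows that with probability $1-O(u/e)$ there is a good step among the $e$ extra steps. At a good step, $\phi(\X,C_i)\le 2\phi(\cW_i^q,C_i)$. Monotonicity of the cost in $i$ then gives $\ALG^{q+e}(\X)\le 2\phi(\cW_{i}^q,C_{i})$.

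It remains to control $\phi(\cW_i^q,C_i)$ at a random (stopping) time $i$. We bound it by $\sup_i \phi(\cW_i^q,C_i)$, or more simply by the sum over all clusters of their cost at the moment they were first covered. That sum has expectation at most $5\OPT^q(\X)$ by the argument of \cref{lem:first-hit}, which applies to every cluster, covered at any time. Markov then gives $\Prob[\,\cdot > \Lambda\OPT^q/2\,]\le 10/\Lambda$. A union bound over the three failure events (too many wasted centers, no good extra step, large covered cost) yields
\[
O\!\left(\frac{z^q}{e+1}+\frac{\E[u]}{e+1}+\frac1\Lambda\right)=O\!\left(\frac{z^q}{e+1}+\frac1\Lambda\right).
\]
When $e=0$, the bound is trivial if $z^q\ge$ const. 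Otherwise $u=0$ with probability $1-O(z^q)$, and then the whole cost is the covered cost, so the case is consistent.

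The main obstacle is the bad-step counting in the second phase. The claim that each bad step covers a new cluster with probability at least $1/2$ is conditional on the history, so summing to $\E[\#\text{bad}]\le 2u$ needs care. I would first try a martingale/optional-stopping argument: the process $\#\{\text{bad steps so far}\}-2\#\{\text{newly covered}\}$ is a supermartingale. The second delicate point is that ``goodness'' is compared against the covered cost at that step rather than at the end. Using the first-hit sum as a single random variable dominating all $\phi(\cW_i^q,C_i)$ should resolve this.
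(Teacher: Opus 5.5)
Your proposal is correct and follows the paper's proof in structure: the same reference clustering $\mathcal P^q$ with $|\cU_q^q|=R^q$, the same stopping event comparing uncovered and covered cost, and the same first-hit charging followed by Markov. The one technical difference is how you bound the probability of never reaching a good step. You use $\E[\#\text{bad steps}\mid c_1,\dots,c_q]\le 2R^q$ plus Markov, which holds directly by linearity over the finite horizon, so no optional stopping is needed, but it forces your separate $e=0$ case. The paper instead couples with $\operatorname{Bin}(e,1/2)$ coin tosses and uses $\one_{\{B<R^q\}}\le R^q/(B+1)$; your extra conditioning on $u\le(e+1)/c$ is harmless but unnecessary.
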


\begin{proof}
Throughout the proof we use Lemma~\ref{lem:wasted} and the notation of
\cref{sec:wasted} with the reference clustering $\mathcal P^q$.
After the first $q$ samples, every non-wasted sample has covered a new
cluster, so
\[
  |\cU_q^q|=R^q.
\]

Let $H$ be the event that at some point during the next $e$ samples the
uncovered clusters cost no more than the covered ones. Until this happens,
the next sample covers a new cluster with probability greater than $1/2$.
We can therefore compare the $e$ samples with $e$ independent fair coin
tosses, coupled so that every head before the stopping condition defining
$H$ is met means that a new cluster is covered.

Let $B\sim\operatorname{Bin}(e,1/2)$ be the number of heads. If $H$ does
not happen, then fewer than the $R^q$ initially uncovered clusters have
been covered, and hence $B<R^q$. Conditional on the first $q$ samples,
$R^q$ is fixed and $B$ is independent of this prefix. Since
$\one_{\{B<R^q\}}\le R^q/(B+1)$, we get
\[
  \Prob(H^c\mid c_1,\ldots,c_q)
  \le R^q\,\E\frac1{B+1}.
\]
Using $1/(m+1)=\int_0^1u^m\,du$ and the binomial generating function,
\[
\begin{aligned}
  \E\frac1{B+1}
\ &=\int_0^1\E[u^B]\,du\\
  &=\int_0^1\left(\frac{1+u}{2}\right)^e du\\
  &=\frac{2(1-2^{-(e+1)})}{e+1}
  \le\frac2{e+1}.
\end{aligned}
\]
Averaging over the prefix $c_1,\ldots,c_q$ and applying
Lemma~\ref{lem:wasted} gives
\begin{equation}\label{eq:repair-fail}
  \Prob(H^c)
  \le\frac{2\E R^q}{e+1}
  \le\frac{10z^q}{e+1}
  =O\left(\frac{z^q}{e+1}\right).
\end{equation}

On $H$, consider the first step at which the uncovered cost is at most the
covered cost. At that step the total cost is at most twice the covered
cost, and adding more centers can only decrease it. Charge each covered
cluster its cost just after it was hit for the first time. By
\cref{lem:one-cluster}, the expected sum of these charges is at most
$5\OPT^q(\X)$, while the current cost of a covered cluster is at most its
charge. It follows that
\[
  \E\!\left[
    \ALG^{q+e}(\X)\one_H
  \right]
  =O(\OPT^q(\X)).
\]
Markov's inequality now yields
\[
\begin{aligned}
  \Prob[\ALG^{q+e}(\X)>\Lambda\OPT^q(\X)]
  &\le\Prob(H^c)+\Prob\bigl(
    H,\ \ALG^{q+e}(\X)>\Lambda\OPT^q(\X)
  \bigr)\\
  &=O\left(\frac{z^q}{e+1}+\frac1\Lambda\right).
\end{aligned}
\]
This proves \eqref{eq:repair-tail}.
\end{proof}

\subsection{Choosing the reference prefix}\label{sec:shifted}

We partition the indices $\{1,\ldots,2K-1\}$ into
blocks on which the optimum costs are within a constant factor. For a
target budget $t$, we use the first index $q$ of its block as the reference
prefix. The $t-q$ additional samples allow \cref{lem:repair} to compare
$\ALG^t$ with $\OPT^q$, while the block construction compares $\OPT^q$
with $\OPT^t$. The following deterministic lemma gives such a partition
for any decreasing sequence.

\begin{lemma}\label{lem:shifted}
Let
\[
  a_1\ge a_2\ge\cdots\ge a_N>0,
  \qquad
  z_q=\sum_{i<q}\frac{a_q}{a_i}.
\]
For every $L>0$, there is a partition of
$\{1,\ldots,N\}$ into consecutive blocks with the following properties.
If $p(t)$ is the first index in the block containing $t$, then
\begin{equation}\label{eq:within-block}
  \frac{a_{p(t)}}{a_t}<e^L
\end{equation}
for every $t$, and at most
\begin{equation}\label{eq:number-bad}
  \frac{N-1}{4}
\end{equation}
indices satisfy
\begin{equation}\label{eq:bad}
  t-p(t)+1<\frac{L}{4}z_{p(t)}.
\end{equation}
\end{lemma}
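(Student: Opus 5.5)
The plan is to build the blocks greedily on a logarithmic scale. Afterwards, the bad indices of each block are charged to the lengths of the earlier blocks, using the fact that the $a$-values fall by a factor $e^L$ from each block start to the next.

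\textbf{Construction.} Set $p_1=1$. Given a block start $p_j$, let the block consist of all consecutive indices $t\ge p_j$ with $a_{p_j}/a_t<e^L$. The next start $p_{j+1}$ is the first index with $a_{p_j}/a_{p_{j+1}}\ge e^L$. Since the sequence is monotone, this yields consecutive blocks, and \eqref{eq:within-block} holds by construction. Write $m_j$ for the length of block $j$ and $J$ for the number of blocks. Two facts follow immediately:
\[
  a_{p_{k+1}}\le e^{-L}a_{p_k},
  \qquad
  a_i\ge a_{p_{k+1}} \text{ for every $i$ in block } k .
\]

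\textbf{Bounding $z_{p_j}$.} For $i$ in block $k<j$ we have $a_i\ge a_{p_{k+1}}\ge e^{L(j-k-1)}a_{p_j}$. Hence
\[
  z_{p_j}\le\sum_{k<j}m_k\,e^{-L(j-k-1)} .
\]

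\textbf{Counting bad indices.} In block $j$, the bad indices are those $t$ with $t-p_j+1<\tfrac L4 z_{p_j}$. The quantity $t-p_j+1$ runs over $1,\dots,m_j$, so the number of bad indices is at most $\min\bigl(m_j,\tfrac L4 z_{p_j}\bigr)$. In particular, there are none when $\tfrac L4 z_{p_j}\le1$, and none in block $1$ since $z_1=0$. Summing over $j$ and exchanging the order of summation gives
\[
  \#\text{bad}\;\le\;\frac L4\sum_{j}\sum_{k<j}m_k e^{-L(j-k-1)}
  \;\le\;\frac{L}{4(1-e^{-L})}\sum_{k<J}m_k .
\]
Here $\sum_{k<J}m_k\le N-1$, because the last block is nonempty.

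\textbf{Main obstacle: the constant.} The factor $L/(1-e^{-L})$ is at least $1$ and roughly $1+L/2$ for small $L$, so the crude bound overshoots $(N-1)/4$ slightly. The loss comes entirely from the immediately preceding block ($k=j-1$), whose elements may have $a_i$ only barely above $a_{p_j}$.

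To recover the constant, I would first try to exploit the $\min(m_j,\cdot)$ and the integrality more carefully:
\begin{itemize}
  \item A block contributes only when $\tfrac L4 z_{p_j}>1$.
  \item Its count is then at most $\lceil \tfrac L4 z_{p_j}\rceil-1$, which is strictly less than $\tfrac L4 z_{p_j}$.
  \item For $k\le j-2$, use the sharper bound $a_i>a_{p_k}e^{-L}$ together with $a_{p_j}\le e^{-2L}a_{p_k}$ wherever it is tighter.
\end{itemize}

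If this is still insufficient, I would tune the greedy threshold, cutting a new block once the ratio reaches $e^{L/2}$ rather than $e^L$. This still satisfies \eqref{eq:within-block} and gives geometric decay $e^{-L/2}$ per block. I would then redo the same exchange-of-summation computation, aiming to show that the total charge per index $k$ is at most $1/4$.
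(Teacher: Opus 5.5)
There is a genuine gap, and it is not a matter of constants. First, the overshoot is not slight. For large $L$, which is the regime the main theorem uses, your bound $\frac{L}{4(1-e^{-L})}(N-1)$ exceeds the target by a factor of about $L$. More importantly, no sharper accounting can fix it, because a greedy partition with a fixed cut level $e^{cL}$ can itself violate the lemma. This covers both your $c=1$ and the fallback $c=1/2$.

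Take $a_1=1$, $a_2=\cdots=a_m=e^{-cL+\epsilon}$ and $a_{m+1}=\cdots=a_{m+M}=e^{-cL}$, with $\epsilon>0$ tiny. Greedy makes $\{1,\dots,m\}$ and $\{m+1,\dots,m+M\}$ the two blocks. Then $z_{m+1}=e^{-cL}+(m-1)e^{-\epsilon}$, which exceeds $m-1$ once $\epsilon$ is small enough. With $M=\lfloor\frac L4(m-1)\rfloor$, every index of the second block is bad. So the fraction of bad indices tends to $\frac{L}{L+4}$ as $m\to\infty$, and this exceeds $\frac14$ for every $L>4/3$.

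This is precisely the term you flagged: a cut at a position $p$ where the log-gap $\Delta_p=\log(a_{p-1}/a_p)$ is tiny, right after a long run of values sitting just above the cut level. Integrality, the sharper bounds for $k\le j-2$, and changing the threshold all leave this term untouched.

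The missing idea is to choose the cuts so that they rarely fall on small gaps. The paper does this with a random shift of the logarithmic grid. Let $v_i=\log(a_1/a_i)$ and let $\theta$ be uniform on $[0,L)$. The blocks are the maximal runs on which $\lfloor (v_i+\theta)/L\rfloor$ is constant. This still gives $a_{p(t)}/a_t<e^{L}$, but now each $p\ge2$ starts a block with probability only $\min\{1,\Delta_p/L\}$.

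Write $z_p=\sum_{i<p}e^{-(v_p-v_i)}$ and exchange the sums. The elementary inequality $\min\{L,\Delta_p\}\le e^{\Delta_p}-1$ makes $\sum_{p>i}e^{-(v_p-v_i)}\min\{1,\Delta_p/L\}$ telescope to at most $1/L$ for each $i\le N-1$. Hence $\E_\theta\sum_{p\text{ starts a block}}z_p\le (N-1)/L$.

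Your own per-block estimate says the block starting at $p$ has at most $\frac L4 z_p$ bad indices. Combined with the above, this gives $\E_\theta[\#\text{bad}]\le (N-1)/4$, so some shift works. In short, your charging step is fine. What fails is the deterministic greedy choice of cut positions, which has to be replaced by averaging over shifts; a good deterministic partition then exists by that averaging.
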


\begin{proof}
Put
\[
  v_i=\log\frac{a_1}{a_i},
\]
so $0=v_1\le\cdots\le v_N$. Choose $\theta$ uniformly from $[0,L)$ and
partition the indices into maximal consecutive blocks on which
\[
  b_\theta(i)=
  \left\lfloor\frac{v_i+\theta}{L}\right\rfloor
\]
is the same value. If $p$ and $t$ lie in one block, then
$0\le v_t-v_p<L$ which implies the required property \eqref{eq:within-block}.

For $p\ge2$, write $\Delta_p=v_p-v_{p-1}$. Note that we have
\begin{equation}\label{eq:start-probability}
  \Prob_\theta(p\text{ begins a block})
  =\min\left\{1,\frac{\Delta_p}{L}\right\}.
\end{equation}
Index $1$ always begins a block, but $z_1=0$. Expanding $z_p$ and
exchanging the sums gives
\[
  \E_\theta\sum_{p\text{ begins a block}}z_p
  =\sum_{i=1}^{N-1}\sum_{p=i+1}^{N}
   e^{-(v_p-v_i)}
   \min\left\{1,\frac{\Delta_p}{L}\right\}.
\]
For fixed $i$ and $p>i$,
\[
  e^{-(v_p-v_i)}
  \min\left\{1,\frac{\Delta_p}{L}\right\}
  \le
  \frac{e^{-(v_{p-1}-v_i)}-e^{-(v_p-v_i)}}{L}.
\]
This holds because after multiplying by $Le^{v_p-v_i}$, we get
$L\min\{1,\Delta_p/L\}\le e^{\Delta_p}-1$. Summing over
$p=i+1,\ldots,N$ telescopes to at most $1/L$. It follows that
\begin{equation}\label{eq:start-weight}
  \E_\theta\sum_{p\text{ begins a block}}z_p
  \le\frac{N-1}{L}.
\end{equation}

In a block beginning at $p$, fewer than $(L/4)z_p$ positive integers
$s=t-p+1$ satisfy $s<(L/4)z_p$. Hence the number $B_\theta$ of indices
satisfying \eqref{eq:bad} satisfies
\[
  B_\theta
  \le\frac{L}{4}\sum_{p\text{ begins a block}}z_p.
\]
By \eqref{eq:start-weight}, some shift has
$B_\theta\le(N-1)/4$.
\end{proof}

\subsection{Proof of the main theorem}\label{sec:main-proof}

\begin{proof}[Proof of \cref{thm:main}]
Recall from \cref{sec:preliminaries} that we may assume
$\OPT^{2K-1}(\X)>0$. Apply \cref{lem:shifted} with
\[
  N=2K-1,\qquad
  a_i=\OPT^i(\X)\quad(1\le i\le N).
\]
Then the quantity $z_q$ of the lemma is our $z^q$ from \eqref{eq:zq}.
Choose a sufficiently large universal constant $L$ and set $\Lambda=L$.
The number of bad indices is at most
\[
  \frac{N-1}{4}
  =\frac{K-1}{2},
\]
which is smaller than $K/2$. Hence more than $K/2$ values
$t\in\{K,\ldots,2K-1\}$ are good.

Fix such a $t$, let $q=p(t)$ be the first index of its block, and put
$e=t-q$. Goodness and \eqref{eq:within-block} give
\[
  e+1\ge\frac{L}{4}z^q,
  \qquad
  \frac{\OPT^q(\X)}{\OPT^t(\X)}<e^{L}.
\]
By \cref{lem:repair}, the probability that $\ALG^t(\X)>\Lambda\OPT^q(\X)$ is
at most
\[
  O\left(\frac{z^q}{e+1}+\frac1\Lambda\right)
  =O\left(\frac4L+\frac1\Lambda\right)
  =O\left(\frac1L\right)
  <\frac12.
\]
Thus every good $t$ satisfies
\[
  \Prob\!\left[
    \frac{\ALG^t(\X)}{\OPT^t(\X)}
    \le\Lambda e^{L}
  \right]>\frac12,
\]
which proves \cref{thm:main} with $C=\Lambda e^{L}$. \end{proof}

\section{Concluding remarks}\label{sec:conclusion}

\paragraph{Other budget distributions.}
The proof never uses uniformity of the budget beyond an upper bound on the
probability of a single value. For example, if $k$ is drawn from a
geometric distribution, the same argument gives a constant approximation
ratio with constant probability.

\paragraph{Guarantee in expectation}
We do not know whether the analysis also holds in expectation. That is, whether \[
  \E\left[\frac{\ALG^k(\X)}{\OPT^k(\X)}\right]
  =O(1)
\]
where the expectation is both over the choice of $k$ and the randomness of the algorithm. 

\paragraph{Acknowledgments.}
The author developed the initial $O(\log\log k)$ argument. GPT-5.6
completed the remaining steps leading to the $O(1)$-approximation result.
Claude Opus 5 assisted with the exposition. The author takes responsibility
for the contents of the paper.

A Lean formalization of \cref{thm:main} is available in a public GitHub
repository \citep{rozhon2026lean}.

\begingroup
\small
\setlength{\bibsep}{2pt plus .5pt}
\bibliographystyle{alpha}
\bibliography{random_k_kmeanspp}

\newcommand{\etalchar}[1]{$^{#1}$}
\begin{thebibliography}{BMV{\etalchar{+}}12}

\bibitem[ADK09]{aggarwal2009adaptive}
Ankit Aggarwal, Amit Deshpande, and Ravi Kannan.
\newblock Adaptive sampling for $k$-means clustering.
\newblock In {\em Approximation, Randomization, and Combinatorial Optimization:
  Algorithms and Techniques}, volume 5687 of {\em Lecture Notes in Computer
  Science}, pages 15--28. Springer, 2009.

\bibitem[AMR11]{arthur2011smoothed}
David Arthur, Bodo Manthey, and Heiko R{\"o}glin.
\newblock Smoothed analysis of the {$k$}-means method.
\newblock {\em Journal of the ACM}, 58(5):19:1--19:31, 2011.

\bibitem[AV07]{arthur2007kmeanspp}
David Arthur and Sergei Vassilvitskii.
\newblock {$k$-means++}: The advantages of careful seeding.
\newblock In {\em Proceedings of the 18th Annual ACM-SIAM Symposium on Discrete
  Algorithms}, pages 1027--1035. SIAM, 2007.

\bibitem[BERS20]{bhattacharya2020noisy}
Anup Bhattacharya, Jan Eube, Heiko R\"oglin, and Melanie Schmidt.
\newblock Noisy, greedy and not so greedy $k$-means++.
\newblock In {\em 28th Annual European Symposium on Algorithms}, volume 173 of
  {\em LIPIcs}, pages 18:1--18:21, 2020.

\bibitem[BJA16]{bhattacharya2016tight}
Anup Bhattacharya, Ragesh Jaiswal, and Nir Ailon.
\newblock Tight lower bound instances for $k$-means++ in two dimensions.
\newblock {\em Theoretical Computer Science}, 634:55--66, 2016.

\bibitem[BMV{\etalchar{+}}12]{bahmani2012scalable}
Bahman Bahmani, Benjamin Moseley, Andrea Vattani, Ravi Kumar, and Sergei
  Vassilvitskii.
\newblock Scalable $k$-means++.
\newblock {\em Proceedings of the VLDB Endowment}, 5(7):622--633, 2012.

\bibitem[BR13]{brunsch2013bad}
Tobias Brunsch and Heiko R\"oglin.
\newblock A bad instance for $k$-means++.
\newblock {\em Theoretical Computer Science}, 505:19--26, 2013.

\bibitem[CGPR20]{choo2020few}
Davin Choo, Christoph Grunau, Julian Portmann, and V{\'a}clav Rozho{\v{n}}.
\newblock {$k$-means++}: Few more steps yield constant approximation.
\newblock In {\em Proceedings of the 37th International Conference on Machine
  Learning}, volume 119 of {\em Proceedings of Machine Learning Research},
  pages 1909--1917. PMLR, 2020.

\bibitem[CKNY08]{chrobak2008incremental}
Marek Chrobak, Claire Kenyon, John Noga, and Neal~E. Young.
\newblock Incremental medians via online bidding.
\newblock {\em Algorithmica}, 50(4):455--478, 2008.

\bibitem[G{\"O}R23]{grunau2023noisy}
Christoph Grunau, Ahmet~Alper {\"O}z{\"u}do{\u{g}}ru, and V{\'a}clav
  Rozho{\v{n}}.
\newblock Noisy {$k$-Means++} revisited.
\newblock In {\em 31st Annual European Symposium on Algorithms (ESA 2023)},
  volume 274 of {\em Leibniz International Proceedings in Informatics
  (LIPIcs)}, pages 55:1--55:7. Schloss Dagstuhl--Leibniz-Zentrum f{\"u}r
  Informatik, 2023.

\bibitem[G{\"O}RT23]{grunau2023greedy}
Christoph Grunau, Ahmet~Alper {\"O}z\"udo\u{g}ru, V\'aclav Rozho\v{n}, and
  Jakub T\v{e}tek.
\newblock A nearly tight analysis of greedy $k$-means++.
\newblock In {\em Proceedings of the 2023 Annual ACM-SIAM Symposium on Discrete
  Algorithms}, pages 1012--1070. SIAM, 2023.

\bibitem[GR22]{grunau2022outliers}
Christoph Grunau and V{\'a}clav Rozho{\v{n}}.
\newblock Adapting {$k$-means} algorithms for outliers.
\newblock In {\em Proceedings of the 39th International Conference on Machine
  Learning}, volume 162 of {\em Proceedings of Machine Learning Research},
  pages 7845--7886. PMLR, 2022.

\bibitem[LS19]{lattanzi2019better}
Silvio Lattanzi and Christian Sohler.
\newblock A better {$k$-means++} algorithm via local search.
\newblock In {\em Proceedings of the 36th International Conference on Machine
  Learning}, volume~97 of {\em Proceedings of Machine Learning Research}, pages
  3662--3671. PMLR, 2019.

\bibitem[MP03]{mettu2003online}
Ramgopal~R. Mettu and C.~Greg Plaxton.
\newblock The online median problem.
\newblock {\em SIAM Journal on Computing}, 32(3):816--832, 2003.

\bibitem[MRS20]{makarychev2020improved}
Konstantin Makarychev, Aravind Reddy, and Liren Shan.
\newblock Improved guarantees for $k$-means++ and $k$-means++ parallel.
\newblock In {\em Advances in Neural Information Processing Systems 33}, 2020.

\bibitem[ORSS06]{ostrovsky2006effectiveness}
Rafail Ostrovsky, Yuval Rabani, Leonard~J. Schulman, and Chaitanya Swamy.
\newblock The effectiveness of {Lloyd}-type methods for the $k$-means problem.
\newblock In {\em Proceedings of the 47th Annual IEEE Symposium on Foundations
  of Computer Science}, pages 165--176, 2006.

\bibitem[PVG{\etalchar{+}}11]{pedregosa2011scikit}
Fabian Pedregosa, Ga{\"e}l Varoquaux, Alexandre Gramfort, Vincent Michel,
  Bertrand Thirion, Olivier Grisel, Mathieu Blondel, Peter Prettenhofer, Ron
  Weiss, Vincent Dubourg, Jake Vanderplas, Alexandre Passos, David Cournapeau,
  Matthieu Brucher, Matthieu Perrot, and {\'E}douard Duchesnay.
\newblock Scikit-learn: Machine learning in {Python}.
\newblock {\em Journal of Machine Learning Research}, 12(85):2825--2830, 2011.

\bibitem[Roz20]{rozhon2020simple}
V{\'a}clav Rozho{\v{n}}.
\newblock Simple and sharp analysis of {$k$-means{\textbar}{\textbar}}.
\newblock In {\em Proceedings of the 37th International Conference on Machine
  Learning}, volume 119 of {\em Proceedings of Machine Learning Research},
  pages 8266--8275. PMLR, 2020.

\bibitem[Roz26]{rozhon2026lean}
V{\'a}clav Rozho{\v{n}}.
\newblock Randomizing the number of centers in {$k$}-means++: {Lean}
  formalization, 2026.
\newblock GitHub repository.

\bibitem[RZ22]{rubinstein2022budget}
Aviad Rubinstein and Junyao Zhao.
\newblock Budget-smoothed analysis for submodular maximization.
\newblock In Mark Braverman, editor, {\em 13th Innovations in Theoretical
  Computer Science Conference (ITCS 2022)}, volume 215 of {\em Leibniz
  International Proceedings in Informatics (LIPIcs)}, pages 113:1--113:23.
  Schloss Dagstuhl--Leibniz-Zentrum fuer Informatik, 2022.

\bibitem[RZ23]{rubinstein2023beyond}
Aviad Rubinstein and Junyao Zhao.
\newblock Beyond worst-case budget-feasible mechanism design.
\newblock In {\em 14th Innovations in Theoretical Computer Science Conference
  (ITCS 2023)}, volume 251 of {\em Leibniz International Proceedings in
  Informatics (LIPIcs)}, pages 93:1--93:22. Schloss Dagstuhl--Leibniz-Zentrum
  f{\"u}r Informatik, 2023.

\bibitem[ST04]{spielman2004smoothed}
Daniel~A. Spielman and Shang-Hua Teng.
\newblock Smoothed analysis of algorithms: Why the simplex algorithm usually
  takes polynomial time.
\newblock {\em Journal of the ACM}, 51(3):385--463, 2004.

\bibitem[Wei16]{wei2016constant}
Dennis Wei.
\newblock A constant-factor bi-criteria approximation guarantee for
  $k$-means++.
\newblock In {\em Advances in Neural Information Processing Systems 29}, pages
  604--612, 2016.

\end{thebibliography}
\endgroup

\end{document}